\documentclass[11pt]{article}

\usepackage[a4paper,margin=2.4cm]{geometry}
\usepackage{amsmath,amssymb,bm}
\usepackage{amsthm}
\usepackage{graphicx}
\usepackage{booktabs}
\usepackage{array}
\usepackage{microtype}
\usepackage{authblk}
\usepackage{natbib}
\usepackage{hyperref}
\usepackage{siunitx}
\usepackage{xcolor}
\usepackage{enumitem}

\hypersetup{colorlinks=true,linkcolor=blue,citecolor=blue,urlcolor=blue}
\newtheorem{proposition}{Proposition}
\newtheorem{corollary}{Corollary}

\title{Fisher-Information Geometry Linking Schr\"odinger Dynamics, Bipartite Correlations, and Decoherence}
\author[1]{Juan Sumaya-Mart\'inez\thanks{ORCID: 0000-0002-7032-8824}}
\author[2]{Marcos C. Gonz\'alez\thanks{ORCID: 0000-0002-0933-6374}}
\affil[1]{Faculty of Sciences, Universidad Aut\'onoma del Estado de M\'exico, Toluca 50000, M\'exico}
\affil[2]{TecNM Tecnol\'ogico de Estudios Superiores de Jocotitl\'an, Jocotitl\'an, M\'exico, MX}
\date{}

\begin{document}
\maketitle

\begin{abstract}
Fisher information appears in quantum theory in two mathematically distinct settings: as a classical probability functional in variational reconstructions of Schr\"odinger dynamics and as the quantum Fisher metric of parameterized density operators. Here we examine how far these roles can be connected without identifying the two objects. Starting from a Hamilton--Jacobi ensemble action, we recover the Fisher term that generates the quantum potential and isolate its contribution to the kinetic-energy expectation. We then analyze the quantum Fisher information matrix (QFIM) of bipartite states. For any pure two-qubit state, the optimized magnitude of the off-diagonal QFIM element for normalized local generators equals the concurrence, while a collective phase generator yields $F_Q=4C^2$. Extending the optimized pure-state quantity by a convex roof gives an exact identity with Wootters concurrence for arbitrary mixed two-qubit states. This differs from evaluating the QFIM directly on a mixed density operator: for Werner states the raw optimized cross-QFIM is $2p^2/(1+p)$ and remains nonzero inside the separable region. The gap between the raw and convex-roof quantities therefore isolates, for this family, correlation geometry not captured by entanglement. Finally, in a two-path interferometer with a which-way marker, the optimal phase Fisher information available from local output statistics is $I_\phi^{\rm opt}=V^2$. These results place probability-gradient energy, bipartite correlation geometry, entanglement, and decoherence-induced loss of local phase sensitivity in a common mathematical setting while preserving the distinctions between classical Fisher information, QFIM geometry, Bell nonlocality, and global phase quantization.
\end{abstract}

\textbf{Keywords:} Fisher information; quantum Fisher information; Schr\"odinger equation; information geometry; entanglement; concurrence; convex roof; decoherence.

\section{Introduction}
Quantum mechanics is exceptionally successful as a predictive theory, yet the physical status of some of its mathematical structures remains unsettled. The wave function evolves unitarily, whereas measurements are described in terms of probabilistic outcomes. Rather than attempting to settle the interpretation problem as a whole, we ask a narrower question: to what extent can dynamical and correlation structures of quantum mechanics be expressed within a common information-geometric framework?

There is a well-established route in this direction. Madelung showed that the Schr\"odinger equation can be written as coupled continuity and Hamilton--Jacobi-like equations for a density and a phase \citep{Madelung1927}; Bohm later emphasized the associated quantum potential in a realist formulation \citep{Bohm1952}. Reginatto approached the problem from the opposite direction and showed that adding a minimum-Fisher-information term to classical ensemble mechanics produces the nonrelativistic Schr\"odinger dynamics \citep{Reginatto1998,Reginatto1999}. Hall and Reginatto subsequently obtained the same structure from an exact uncertainty principle for nonclassical momentum fluctuations \citep{HallReginatto2002}. These results do not prove that Fisher information is an ontological substance, but they establish that it can sit inside the dynamical principle rather than merely being calculated after the fact.

A second line of work begins with statistical distinguishability. The quantum Fisher information introduced in quantum estimation theory supplies the local metric associated with parameter changes of a quantum state \citep{BraunsteinCaves1994,PetzGhinea2011}. It underlies quantum-enhanced metrology \citep{Giovannetti2006,Giovannetti2011,Pezze2018,TothApellaniz2014} and provides experimentally useful witnesses of multipartite entanglement \citep{Hyllus2012,Toth2012}. Quantum correlations, however, exhibit a hierarchy: separability defines the unentangled sector, whereas entanglement, steering, and Bell nonlocality are increasingly restrictive notions \citep{HorodeckiReview2009,Wiseman2007}. For this reason, any attempt to identify ``Fisher information'' directly with ``entanglement'' is too coarse.

The purpose of this paper is to place these two uses of Fisher geometry in the same framework without treating them as the same quantity. The spatial Fisher functional $I_F[\rho]$ is a classical functional of a configuration-space probability density, whereas the QFIM is the symmetric-logarithmic-derivative metric on a parameterized family of density operators. Their domains and operational meanings are different; the connection developed here is structural rather than an identification. Five results organize the analysis. First, the spatial Fisher functional generates the term required to pass from classical Hamilton--Jacobi ensemble dynamics to the local Madelung form of Schr\"odinger dynamics. Second, for pure two-qubit states an optimized cross component of the local QFIM equals concurrence once a fixed normalization of the local generators is imposed. Third, the convex-roof extension of that optimized pure-state quantity is exactly Wootters concurrence for every mixed two-qubit state. Fourth, direct evaluation of the mixed-state QFIM for Werner states gives a different, nonzero quantity in part of the separable region, thereby separating raw parameter-space correlation from convex-roof entanglement. Fifth, in a two-path interferometer the phase Fisher information available from local fringes decreases in lockstep with visibility as which-way information is transferred to a marker.

The claim is consequently limited but quantitative. We do not identify classical Fisher information with QFI, identify a raw QFIM entry with entanglement, claim that Fisher information solves the measurement problem, or attempt to restore a local hidden-variable description. Instead, we compare the roles played by information geometry in probability-gradient energy, parameter-space distinguishability, bipartite correlation, and two-qubit entanglement, and determine explicitly which identities are exact and which require an optimization or convex-roof construction.

\section{Schr\"odinger dynamics from Fisher information}
Consider a classical ensemble described by a normalized probability density $\rho(\mathbf{x},t)$ and Hamilton principal function $S(\mathbf{x},t)$. We assume that $\rho$ and $S$ are sufficiently smooth on each nodal domain, that the Fisher integral is finite, and that the variations either vanish at the boundary or decay sufficiently rapidly at infinity so that the integrations by parts below generate no surface terms. The classical ensemble action is
\begin{equation}
\mathcal{A}_{\rm cl}=\int dt\,d\mathbf{x}\,\rho\left[\partial_t S+\frac{|\nabla S|^2}{2m}+V(\mathbf{x},t)\right].
\end{equation}
Variation with respect to $S$ gives the continuity equation
\begin{equation}
\partial_t\rho+\nabla\cdot\left(\rho\frac{\nabla S}{m}\right)=0,
\end{equation}
whereas variation with respect to $\rho$ yields the classical Hamilton--Jacobi equation
\begin{equation}
\partial_tS+\frac{|\nabla S|^2}{2m}+V=0.
\end{equation}

The classical Fisher information defines a local statistical sensitivity and underlies the Cram\'er--Rao theory of estimation \citep{Fisher1925,Rao1945,Cramer1946,AmariNagaoka2000}. The spatial Fisher information used here is
\begin{equation}
I_F[\rho]=\int d\mathbf{x}\,\frac{|\nabla\rho|^2}{\rho}=4\int d\mathbf{x}\,|\nabla\sqrt{\rho}|^2.
\end{equation}
Following the information-theoretic construction, we consider
\begin{equation}
\mathcal{A}=\mathcal{A}_{\rm cl}+\frac{\hbar^2}{8m}\int dt\,I_F[\rho].
\end{equation}
Variation with respect to $\rho$ now gives
\begin{equation}
\partial_tS+\frac{|\nabla S|^2}{2m}+V-\frac{\hbar^2}{2m}\frac{\nabla^2\sqrt{\rho}}{\sqrt{\rho}}=0.
\end{equation}
The additional term
\begin{equation}
Q[\rho]=-\frac{\hbar^2}{2m}\frac{\nabla^2\sqrt{\rho}}{\sqrt{\rho}}
\end{equation}
has the form of the quantum potential. Introducing the Madelung transformation
\begin{equation}
\psi(\mathbf{x},t)=\sqrt{\rho(\mathbf{x},t)}\exp\left[\frac{iS(\mathbf{x},t)}{\hbar}\right],
\end{equation}
the continuity equation and modified Hamilton--Jacobi equation combine into
\begin{equation}
i\hbar\partial_t\psi=\left[-\frac{\hbar^2}{2m}\nabla^2+V\right]\psi.
\end{equation}
Thus, at the level of the ensemble variational formulation,
\begin{equation}
\boxed{\text{Hamilton--Jacobi dynamics}+\text{Fisher cost}=\text{Schr\"odinger dynamics}.}
\end{equation}
This equivalence is local and must be supplemented by the usual global single-valuedness or circulation condition on the phase. For a closed loop $\Gamma$ that does not cross a node of the wave function, the standard quantum sector satisfies
\begin{equation}
\oint_{\Gamma}\nabla S\cdot d\boldsymbol\ell=2\pi n\hbar,\qquad n\in\mathbb Z.
\end{equation}
Madelung variables can otherwise admit hydrodynamic solutions with nonquantized circulation and are therefore not globally equivalent to a single-valued Schr\"odinger wave function. This point, emphasized by Wallstrom, is a genuine limitation of hydrodynamic and stochastic reconstructions and is not removed merely by adding the Fisher functional \citep{Wallstrom1994}. In what follows we work within the sector satisfying this standard quantum phase quantization condition.

The kinetic-energy expectation can also be decomposed as
\begin{equation}
\frac{\hbar^2}{2m}\int |\nabla\psi|^2d\mathbf{x}
=\int\rho\frac{|\nabla S|^2}{2m}d\mathbf{x}+\frac{\hbar^2}{8m}I_F[\rho].
\label{eq:kinetic-decomp}
\end{equation}
Equation~\eqref{eq:kinetic-decomp} gives the Fisher term a direct energetic interpretation within the Madelung representation: it is the part of the kinetic expectation associated with gradients of the probability amplitude. Narrower spatial structures generally require steeper gradients and therefore a larger Fisher contribution. This statement is a property of the decomposition, not an additional postulate about the ontology of information.

\section{Local quantum Fisher geometry for bipartite systems}
Let a bipartite pure state undergo independent local unitary transformations,
\begin{equation}
|\psi(\theta_A,\theta_B)\rangle=e^{-i(\theta_A G_A+\theta_B G_B)}|\psi\rangle,
\end{equation}
where $G_A=g_A\otimes I$ and $G_B=I\otimes g_B$. For a pure state undergoing unitary parameter encoding, the QFIM reduces to the covariance form \citep{BraunsteinCaves1994,TothApellaniz2014,Pezze2018}
\begin{equation}
F_{\mu\nu}=4\,\mathrm{Re}\langle\Delta G_\mu\Delta G_\nu\rangle.
\end{equation}
Hence
\begin{equation}
\mathbf F=4\begin{pmatrix}
(\Delta G_A)^2 & \mathrm{Cov}(G_A,G_B)\\
\mathrm{Cov}(G_A,G_B) & (\Delta G_B)^2
\end{pmatrix}.
\end{equation}
The diagonal elements quantify local sensitivity, whereas the off-diagonal element
\begin{equation}
F_{AB}=4\,\mathrm{Cov}(G_A,G_B)
\end{equation}
measures the statistical coupling between the two local parameter directions. An individual off-diagonal QFIM component is not invariant under an arbitrary rescaling or reparameterization of those directions. To make the comparison below unambiguous, we therefore restrict the local qubit generators to $G_A=(\mathbf n\cdot\boldsymbol\sigma)/2$ and $G_B=(\mathbf m\cdot\boldsymbol\sigma)/2$, with $|\mathbf n|=|\mathbf m|=1$. This fixes their spectral range and removes a trivial scaling freedom. For a pure product state all connected correlations between strictly local observables vanish, so $F_{AB}=0$ for every pair of such local generators.

\section{Pure two-qubit states}
Any pure two-qubit state can be brought by local unitaries to Schmidt form,
\begin{equation}
|\psi_\alpha\rangle=\cos\alpha\,|00\rangle+\sin\alpha\,|11\rangle,
\qquad 0\le \alpha\le \frac{\pi}{4}.
\end{equation}
The concurrence is \citep{Wootters1998}
\begin{equation}
C(\alpha)=\sin(2\alpha).
\end{equation}
For local generators
\begin{equation}
G_A=\frac{\sigma_x^A}{2},\qquad G_B=\frac{\sigma_x^B}{2},
\end{equation}
one has $\langle\sigma_x^A\rangle=\langle\sigma_x^B\rangle=0$ and
\begin{equation}
\langle\sigma_x^A\sigma_x^B\rangle=\sin(2\alpha).
\end{equation}
Therefore
\begin{equation}
F_{AB}=\sin(2\alpha)=C.
\end{equation}

\subsection{Optimized cross-QFIM theorem for pure two-qubit states}
\begin{proposition}
For any pure two-qubit state $|\psi\rangle$, let the local generators be $G_A=(\mathbf n\cdot\boldsymbol\sigma)/2$ and $G_B=(\mathbf m\cdot\boldsymbol\sigma)/2$ with unit vectors $\mathbf n$ and $\mathbf m$. Then the maximum magnitude of the off-diagonal pure-state QFI matrix element equals the concurrence:
\begin{equation}
\max_{\mathbf n,\mathbf m}|F_{AB}|=C(|\psi\rangle).
\end{equation}
\end{proposition}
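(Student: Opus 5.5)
We reduce to the Schmidt form of Section~4 by local unitaries, then compute the covariance as a bilinear form in $(\mathbf n,\mathbf m)$ and maximize it. A local unitary $U_A\otimes U_B$ acts on the generators by conjugation. This conjugation sends $\mathbf n\cdot\boldsymbol\sigma$ to $(R_A\mathbf n)\cdot\boldsymbol\sigma$ for some rotation $R_A\in SO(3)$, and likewise on $B$. Since the maximization is over all unit vectors, $\max_{\mathbf n,\mathbf m}|F_{AB}|$ is invariant under local unitaries. Concurrence is also a local-unitary invariant. Hence it suffices to treat $|\psi_\alpha\rangle=\cos\alpha|00\rangle+\sin\alpha|11\rangle$ with $0\le\alpha\le\pi/4$.

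For this state, write $F_{AB}=4\,\mathrm{Cov}(G_A,G_B)=\mathbf n^{T}M\mathbf m$. Here
\begin{equation}
M_{ij}=\langle\sigma_i\otimes\sigma_j\rangle-\langle\sigma_i\otimes I\rangle\langle I\otimes\sigma_j\rangle .
\end{equation}
Direct evaluation gives the needed entries. The local Bloch vectors are $\langle\sigma_z^A\rangle=\langle\sigma_z^B\rangle=\cos2\alpha$, with vanishing $x,y$ components. The correlation tensor is diagonal: $T_{xx}=\sin2\alpha$, $T_{yy}=-\sin2\alpha$, $T_{zz}=1$. Hence
\begin{equation}
M=\mathrm{diag}\bigl(\sin2\alpha,\,-\sin2\alpha,\,1-\cos^22\alpha\bigr)=\mathrm{diag}\bigl(C,\,-C,\,C^2\bigr).
\end{equation}
The $x$-$x$ entry reproduces the computation preceding the proposition.

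The maximization is then elementary. For unit vectors, $\max_{\mathbf n,\mathbf m}|\mathbf n^TM\mathbf m|$ is the operator norm of $M$, i.e.\ its largest singular value. Here that is $\max(C,C^2)=C$, because $0\le C\le1$. It is attained at $\mathbf n=\mathbf m=\hat x$, so $\max|F_{AB}|=C(|\psi\rangle)$. The product case $C=0$ is consistent with the earlier remark that $F_{AB}$ vanishes for product states.

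The main obstacle is to handle the connected part properly rather than using only the raw correlation tensor $T$. The $zz$ entry of $T$ equals $1$, which exceeds $C$. Only after subtracting the product of local expectations does it drop to $C^2\le C$. So the key step to verify is the subtraction $1-\cos^22\alpha=\sin^22\alpha$. A secondary check is the local-unitary invariance argument, namely that conjugation by $SU(2)$ realizes rotations of the generator directions surjectively onto $SO(3)$.
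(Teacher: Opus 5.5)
Your proof is correct and follows the paper's argument step for step: you reduce to Schmidt form by local unitaries, write $F_{AB}=\mathbf n^T T^{(c)}\mathbf m$ with the connected tensor $\mathrm{diag}(C,-C,C^2)$, and take its operator norm $C$. Your entry-by-entry computation, and your remark that the subtraction $1-\cos^2 2\alpha$ is what lowers the $zz$ entry to $C^2$, fill in details the paper states without derivation; for the invariance step you only need each conjugation to act as an orthogonal map on the unit sphere, not surjectivity onto $SO(3)$.
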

\begin{proof}
Local unitaries bring every pure two-qubit state to Schmidt form without changing concurrence. For $|\psi_\alpha\rangle=\cos\alpha|00\rangle+\sin\alpha|11\rangle$, the connected Pauli correlation tensor $T^{(c)}_{ij}=\langle\sigma_i\otimes\sigma_j\rangle-\langle\sigma_i\otimes I\rangle\langle I\otimes\sigma_j\rangle$ is diagonal with singular values $\{C,C,C^2\}$, where $C=\sin2\alpha$. For the normalized generators above, the cross-QFI element is $F_{AB}=\mathbf n^T T^{(c)}\mathbf m$. Maximization over unit vectors gives the operator norm of $T^{(c)}$, namely $C$ because $0\le C\le1$. Local-unitary invariance completes the proof.
\end{proof}

Define the optimized local cross-QFIM quantity
\begin{equation}
\mathcal F_\times(|\psi\rangle)=\max_{\mathbf n,\mathbf m}\left|F_{AB}\left(\frac{\mathbf n\cdot\boldsymbol\sigma}{2},\frac{\mathbf m\cdot\boldsymbol\sigma}{2}\right)\right|.
\end{equation}
For the Schmidt state, the connected correlation tensor has singular values $\{C,C,C^2\}$. Since $0\le C\le1$, the maximum singular value is $C$. Local-unitary invariance then gives
\begin{equation}
\boxed{\mathcal F_\times=C}
\end{equation}
for pure two-qubit states with the generator normalization fixed above. The equality should be read as a normalized tangent-space identity, not as a coordinate-free statement about arbitrary QFIM entries. It is closely related to covariance-matrix characterizations of entanglement \citep{Gittsovich2010}; the present contribution is to express the same connected-correlation structure directly as an optimized cross component of the local QFI metric and to follow its breakdown in mixed states.

A complementary result follows from the collective generator
\begin{equation}
J_z=\frac{1}{2}(\sigma_z^A+\sigma_z^B).
\end{equation}
For $|\psi_\alpha\rangle$,
\begin{equation}
\langle J_z\rangle=\cos(2\alpha),\qquad \langle J_z^2\rangle=1,
\end{equation}
so
\begin{equation}
F_Q[J_z]=4(\Delta J_z)^2=4\sin^2(2\alpha)=4C^2.
\end{equation}
Thus different parameter directions probe different aspects of the same state geometry.

An important negative result follows if the parameter is the Schmidt angle itself. Direct differentiation gives $F_Q^{(\alpha)}=4$ for all $\alpha$. Therefore an arbitrary QFI cannot be identified with entanglement; the generator matters.

\begin{figure}[t]
\centering
\includegraphics[width=0.88\linewidth]{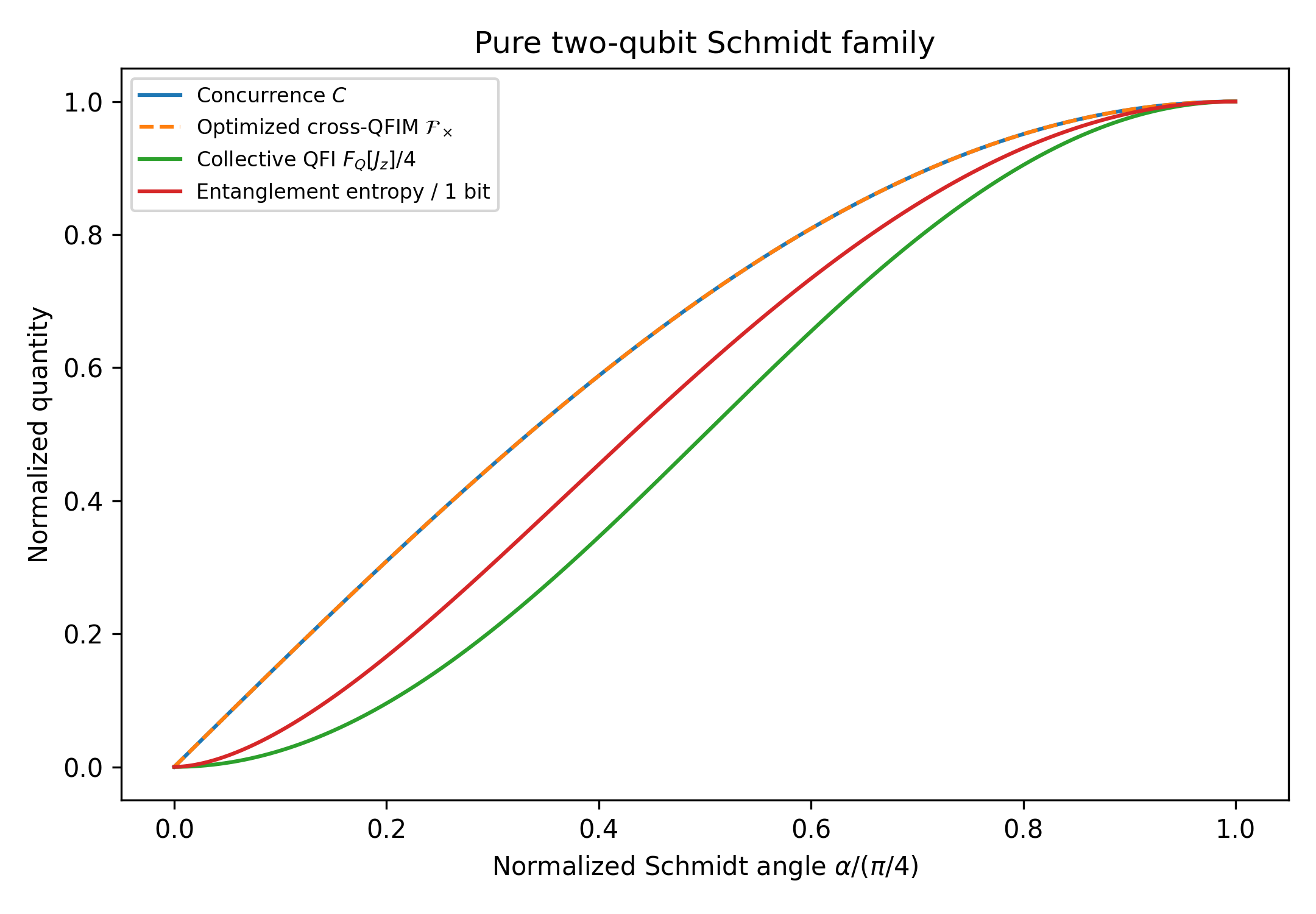}
\caption{Pure two-qubit Schmidt family. The optimized cross-QFIM quantity coincides with concurrence, $\mathcal F_\times=C$. The collective QFI normalized by four follows $C^2$, while the entanglement entropy is a different monotonic function of the Schmidt angle.}
\label{fig:schmidt}
\end{figure}

\section{Mixed states: raw cross-QFIM and convex-roof extension}
To test whether the cross-QFIM quantity remains an entanglement measure for mixed states, consider the Werner family \citep{Werner1989}
\begin{equation}
\rho_W(p)=p|\Psi^-\rangle\langle\Psi^-|+\frac{1-p}{4}I,
\qquad 0\le p\le1,
\end{equation}
where
\begin{equation}
|\Psi^-\rangle=\frac{|01\rangle-|10\rangle}{\sqrt2}.
\end{equation}
The concurrence follows from the two-qubit formula of \citet{Wootters1998}:
\begin{equation}
C_W(p)=\max\left[0,\frac{3p-1}{2}\right],
\end{equation}
so entanglement begins at $p>1/3$.

For mixed states the pure-state covariance identity is no longer sufficient. Using the symmetric-logarithmic-derivative QFIM, the unitary-encoding matrix elements may be written in the spectral basis $\rho=\sum_i\lambda_i|i\rangle\langle i|$ as \citep{BraunsteinCaves1994,PetzGhinea2011}
\begin{equation}
F_{\mu\nu}=2\sum_{i,j;\,\lambda_i+\lambda_j>0}
\frac{(\lambda_i-\lambda_j)^2}{\lambda_i+\lambda_j}
\,\mathrm{Re}\!\left[\langle i|G_\mu|j\rangle\langle j|G_\nu|i\rangle\right].
\end{equation}
For matched local generators $G_A=\sigma_x^A/2$ and $G_B=\sigma_x^B/2$, this expression gives
\begin{equation}
F_{AA}=F_{BB}=\frac{2p^2}{1+p},\qquad
F_{AB}=-\frac{2p^2}{1+p}.
\end{equation}
Because the Werner state is invariant under common local rotations $U\otimes U$, the magnitude is independent of the chosen matched Pauli axis and is the maximum over normalized local directions. Hence
\begin{equation}
\boxed{\mathcal F_\times^{(W)}=\frac{2p^2}{1+p}.}
\end{equation}
This quantity is nonzero for every $p>0$, including the separable interval $0<p\le1/3$. Therefore the raw cross-QFIM element is not a universal mixed-state entanglement measure. It quantifies parameter-space correlation more broadly. In a mixed state that correlation can contain classical correlations as well as nonclassical correlations that do not amount to entanglement, such as discord-type correlations; entanglement is a stricter property of the density operator.

\subsection{Convex-roof extension and exact mixed-state concurrence}
The pure-state identity in Proposition~1 has a direct mixed-state extension if the optimization is performed at the level of pure-state decompositions rather than on the QFIM of the mixed density operator itself. Define
\begin{equation}
\mathcal F_\times^{\rm cr}(\rho)=
\inf_{\{q_k,|\psi_k\rangle\}}
\sum_k q_k\,\mathcal F_\times(|\psi_k\rangle),
\qquad
\rho=\sum_k q_k|\psi_k\rangle\langle\psi_k|.
\end{equation}
The infimum is over all pure-state ensembles realizing $\rho$, as in standard ensemble-decomposition and convex-roof constructions \citep{Hughston1993,Toth2013}.

\begin{corollary}
For every two-qubit density operator $\rho$,
\begin{equation}
\boxed{\mathcal F_\times^{\rm cr}(\rho)=C(\rho),}
\end{equation}
where $C(\rho)$ is Wootters concurrence.
\end{corollary}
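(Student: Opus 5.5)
The plan is to reduce the corollary to Proposition~1 together with Wootters' convex-roof theorem. By Proposition~1, for every pure two-qubit state $|\psi\rangle$ we have $\mathcal F_\times(|\psi\rangle)=C(|\psi\rangle)$, where $C(|\psi\rangle)=|\langle\psi|\sigma_y\otimes\sigma_y|\psi^*\rangle|$ is the pure-state concurrence. This pure-state identity is established in the Schmidt basis and transported by local unitaries. Both $\mathcal F_\times$ and $C$ are local-unitary invariant: for $\mathcal F_\times$, a local unitary $U_A\otimes U_B$ rotates the connected correlation tensor by orthogonal matrices on each side, so its operator norm is unchanged. Hence the two functions agree pointwise on the full set of pure states, not only on Schmidt representatives.

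Substituting this pointwise equality into the definition of $\mathcal F_\times^{\rm cr}$ gives, for every ensemble $\{q_k,|\psi_k\rangle\}$ realizing $\rho$,
\begin{equation}
\sum_k q_k\,\mathcal F_\times(|\psi_k\rangle)=\sum_k q_k\,C(|\psi_k\rangle).
\end{equation}
Taking the infimum over all decompositions on both sides yields
\begin{equation}
\mathcal F_\times^{\rm cr}(\rho)=\inf_{\{q_k,|\psi_k\rangle\}}\sum_k q_k\,C(|\psi_k\rangle).
\end{equation}
The right-hand side is, by definition, the convex-roof concurrence.

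The remaining step is to invoke Wootters' theorem \citep{Wootters1998}. It states that for every two-qubit $\rho$ this convex roof equals $\max\{0,\lambda_1-\lambda_2-\lambda_3-\lambda_4\}$, where the $\lambda_i$ are the decreasing square roots of the eigenvalues of $\rho(\sigma_y\otimes\sigma_y)\rho^*(\sigma_y\otimes\sigma_y)$. Wootters' argument also shows that the infimum is attained by an ensemble of at most four states of equal concurrence, so the infimum is in fact a minimum. This identifies $\mathcal F_\times^{\rm cr}(\rho)$ with $C(\rho)$ and completes the argument. As a consistency check I would note that for Werner states the result gives $\max[0,(3p-1)/2]$, which contrasts with the raw $2p^2/(1+p)$.

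The main obstacle is not a calculation but bookkeeping. The argument needs the pure-state identity to hold exactly for every pure state, including non-Schmidt representatives and arbitrary global phases, since the infimum ranges over all such ensembles. It also needs the definition of the convex roof used here to coincide with Wootters' (same ensemble class, with finite decompositions sufficing). I would address the first point by checking that the singular values of $T^{(c)}$ are invariant under $O\otimes O'$ rotations induced by local unitaries. For the second, I would rely on the Hughston--Jozsa--Wootters characterization of decompositions \citep{Hughston1993}, together with Carath\'eodory's theorem, to justify restricting to finite ensembles.
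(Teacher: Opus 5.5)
Your proposal is correct and follows the same route as the paper. You substitute the pointwise identity $\mathcal F_\times(|\psi\rangle)=C(|\psi\rangle)$ from Proposition~1 term by term into the defining infimum, then invoke Wootters' result that the convex roof of pure-state concurrence equals $C(\rho)$; your extra bookkeeping (local-unitary invariance of the singular values of $T^{(c)}$ and the restriction to finite ensembles) only makes explicit what the paper's short proof leaves implicit.
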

\begin{proof}
Wootters concurrence is the convex-roof extension of the pure-state concurrence \citep{Wootters1998}. Proposition~1 gives $\mathcal F_\times(|\psi_k\rangle)=C(|\psi_k\rangle)$ for every pure two-qubit component. Substituting this identity term by term into the defining infimum makes the two convex-roof optimizations identical.
\end{proof}

The distinction between $\mathcal F_\times^{\rm cr}(\rho)$ and the raw cross-QFIM evaluated directly on $\rho$ is essential. The former is an entanglement monotone because it is exactly concurrence for two qubits; the latter measures parameter-space coupling of the mixed state and can remain nonzero on separable states.

For the Werner family,
\begin{equation}
\mathcal F_\times^{\rm cr}(\rho_W)=C_W(p)=\max\left[0,\frac{3p-1}{2}\right].
\end{equation}
It is useful to quantify the excess of the raw mixed-state geometry above its convex-roof entanglement content,
\begin{equation}
\Delta_F(p)=\mathcal F_\times^{(W)}-\mathcal F_\times^{\rm cr}(\rho_W).
\end{equation}
For Werner states this difference is analytic:
\begin{equation}
\boxed{
\Delta_F(p)=
\begin{cases}
\dfrac{2p^2}{1+p}, & 0\le p\le\dfrac13,\\[2mm]
\dfrac{(1-p)^2}{2(1+p)}, & \dfrac13<p\le1.
\end{cases}}
\end{equation}
The quantity $\Delta_F$ should not be interpreted as a universal measure of classical correlation or quantum discord. It is simply the Werner-family difference between the raw mixed-state cross-QFIM coupling and the convex-roof part that is exactly attributable to two-qubit entanglement. It vanishes at both $p=0$ and $p=1$ and is positive for intermediate mixed Werner states.

Within the Werner family, the monotonic dependence on $p$ also allows useful family-specific landmarks. At the separability boundary $p=1/3$, the convex-roof quantity is still zero while the raw cross-QFIM has already reached
\begin{equation}
\mathcal F_\times^{(W)}=\frac16.
\end{equation}
For $p>1/3$, $\mathcal F_\times^{\rm cr}=C_W$ becomes positive.
Using the two-qubit CHSH criterion of the Horodeckis \citep{Horodecki1995}, the maximal CHSH value for this Werner family is
\begin{equation}
B_{\max}=2\sqrt2\,p,
\end{equation}
so CHSH violation begins at $p>1/\sqrt2$. The corresponding cross-QFIM value is
\begin{equation}
\mathcal F_\times^{(W)}=2-\sqrt2\approx0.5858.
\end{equation}
The family therefore separates into three regimes relevant to the present comparison: separable states with nonzero cross-QFIM, entangled yet CHSH-local states, and CHSH-nonlocal states. The first regime should not be labelled purely classical: separable Werner states can still carry nonclassical correlations in measures other than entanglement.

\begin{figure}[t]
\centering
\includegraphics[width=0.88\linewidth]{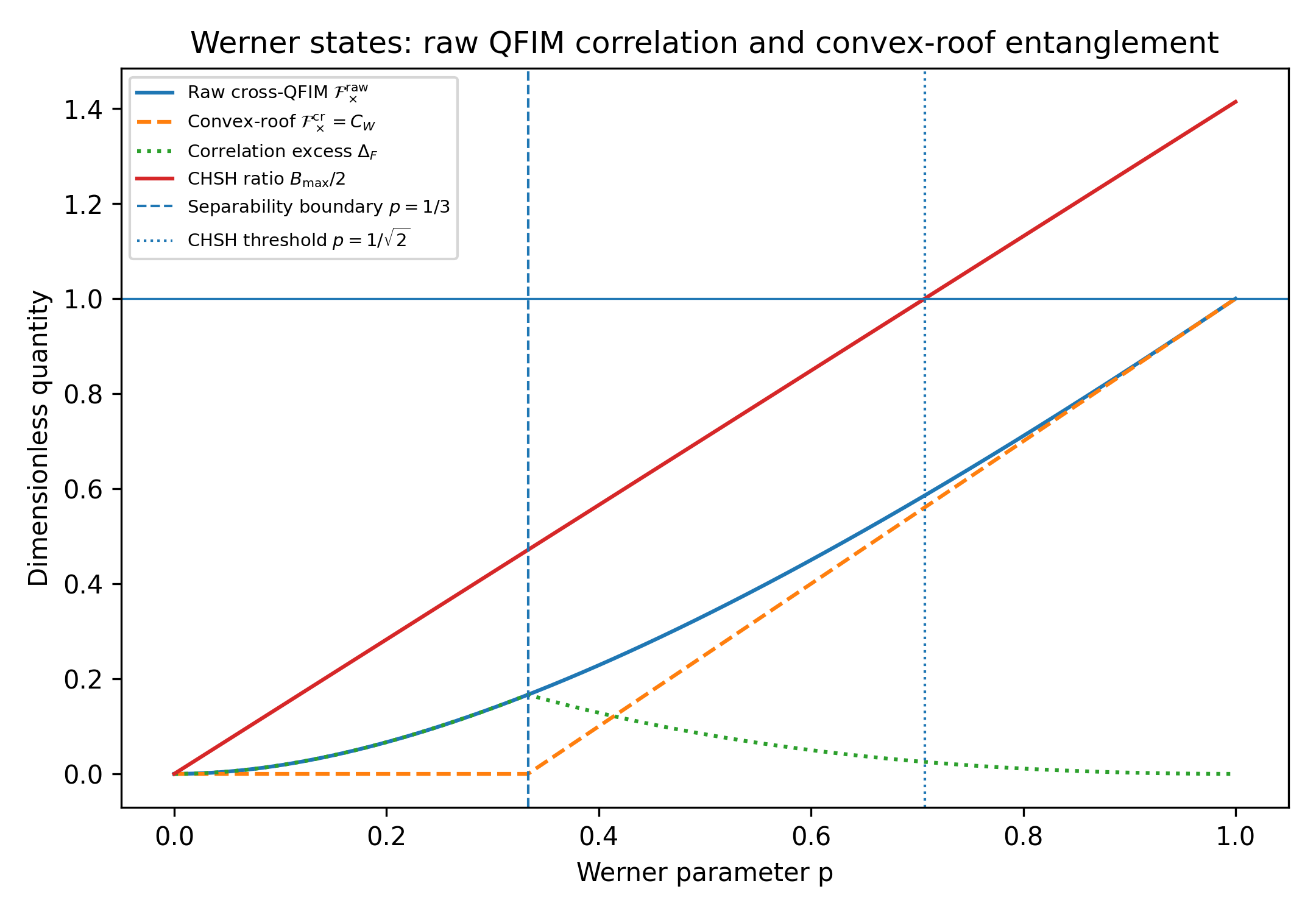}
\caption{Werner states. The raw optimized cross-QFIM $\mathcal F_\times^{\rm raw}=2p^2/(1+p)$ is nonzero for every $p>0$, whereas the convex-roof extension $\mathcal F_\times^{\rm cr}$ coincides exactly with the Werner concurrence. Their difference $\Delta_F$ quantifies the family-specific excess of raw parameter-space coupling over the entanglement contribution. CHSH violation begins only at $p>1/\sqrt2$.}
\label{fig:werner}
\end{figure}

\section{Decoherence and Fisher-information redistribution}
Consider a symmetric two-path state entangled with a which-way marker,
\begin{equation}
|\Psi\rangle=\frac{|1\rangle|M_1\rangle+e^{i\phi}|2\rangle|M_2\rangle}{\sqrt2},
\end{equation}
with marker overlap $\gamma=\langle M_1|M_2\rangle$. Writing $\gamma=|\gamma|e^{i\vartheta}$, its phase can be absorbed without loss of generality into the definition of the controllable relative phase, $\phi\mapsto\phi+\vartheta$. We therefore take the effective marker overlap to be real and nonnegative in the output probabilities below. After tracing over the marker, the reduced path state contains off-diagonal terms proportional to $\gamma$, and the interference visibility is
\begin{equation}
V=|\gamma|.
\end{equation}
For an ideal symmetric pure marker model, path distinguishability is
\begin{equation}
D=\sqrt{1-|\gamma|^2},
\end{equation}
saturating the duality relation $D^2+V^2=1$ \citep{Englert1996}.

For a two-output interference measurement,
\begin{equation}
P_\pm(\phi)=\frac12\left[1\pm|\gamma|\cos\phi\right],
\end{equation}
the classical Fisher information for estimating the phase is
\begin{equation}
I_\phi=\sum_{\pm}\frac{(\partial_\phi P_\pm)^2}{P_\pm}
=\frac{|\gamma|^2\sin^2\phi}{1-|\gamma|^2\cos^2\phi}.
\end{equation}
At the optimal quadrature operating point $\phi=\pi/2$,
\begin{equation}
\boxed{I_\phi^{\rm opt}=|\gamma|^2=V^2.}
\end{equation}
Thus the phase information accessible from subsystem interference vanishes as which-way states become orthogonal. Globally, however, the system--marker state remains correlated. It is therefore more precise to say that the information available to a local interference measurement has been redistributed into correlations with the marker. Related descriptions of the distribution and transfer of Fisher information in composite quantum systems have been developed independently \citep{Lu2012}; the present two-path calculation provides a particularly transparent instance of that idea in the context of decoherence \citep{Zurek2003,Joos2003,Schlosshauer2005}.

\begin{figure}[t]
\centering
\includegraphics[width=0.88\linewidth]{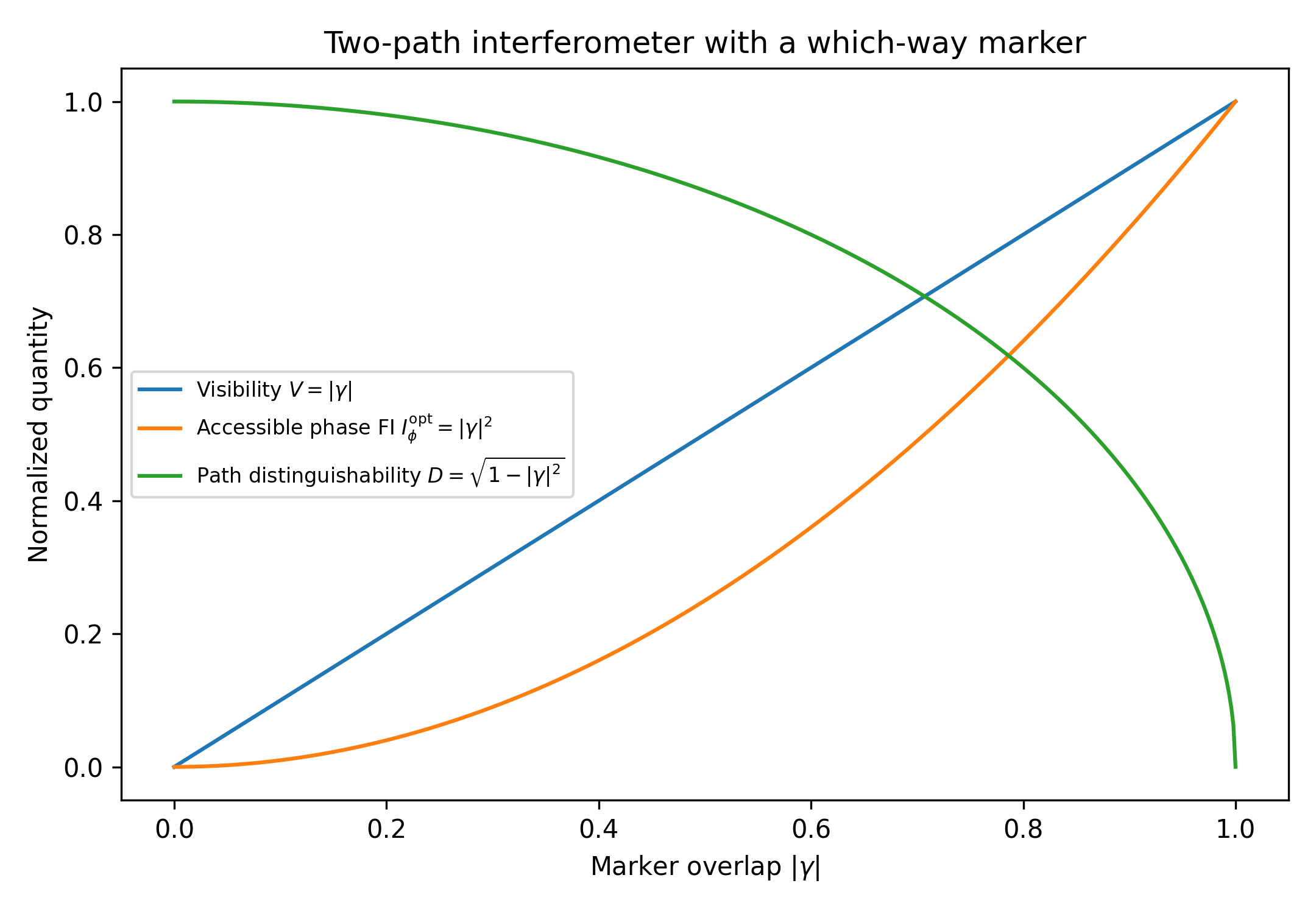}
\caption{Two-path interferometer with which-way marker. The accessible phase Fisher information at quadrature is $I_\phi^{\rm opt}=|\gamma|^2=V^2$, while path distinguishability increases as the marker states become more orthogonal.}
\label{fig:decoherence}
\end{figure}

\section{Analysis of results}
Taken together, the examples reveal a hierarchy rather than a single universal Fisher criterion. The diagonal QFI elements describe local sensitivity and are not entanglement indicators by themselves. The off-diagonal QFI element describes coupling between local statistical directions. For pure product states it vanishes for all strictly local generators; for pure entangled states, suitable local generators generate nonzero cross terms, and optimization yields the concurrence for two qubits.

The mixed-state analysis sharpens this result rather than merely breaking it. Direct evaluation of the QFIM on a mixed density operator produces a raw cross-parameter coupling that can be nonzero for separable states. By contrast, the convex-roof extension of the optimized pure-state quantity is exactly the two-qubit concurrence. Thus the two constructions answer different questions: $\mathcal F_\times^{\rm raw}$ probes correlation in the parameterized mixed-state geometry, whereas $\mathcal F_\times^{\rm cr}$ isolates entanglement because the ensemble optimization removes contributions that can be represented without irreducible two-qubit entanglement. This distinction is consistent with the broader separation between total, classical, discord-like quantum, and entanglement correlations in mixed states \citep{Modi2012,Adesso2016}. The Werner difference $\Delta_F$ is consequently useful as a controlled example of correlation beyond the convex-roof entanglement contribution, but it is not promoted here to a universal discord or classical-correlation measure.

The Werner family also makes visible the strict distinction between entanglement and Bell nonlocality. Bell's theorem and the CHSH construction rule out broad classes of local hidden-variable accounts of the observed correlations \citep{Bell1964,CHSH1969}, but entanglement is already present well before the Werner state violates CHSH. The thresholds $p=1/3$ and $p=1/\sqrt2$ therefore mark different resources, consistent with the broader hierarchy of entanglement, steering, and Bell nonlocality \citep{HorodeckiReview2009,Wiseman2007}.

The interferometric example supplies a complementary operational picture. As the marker overlap decreases, fringe visibility and locally accessible phase Fisher information decrease together, while path distinguishability increases. The calculation therefore tracks where a specific piece of phase sensitivity becomes inaccessible to local detection as correlations with the marker grow; it should not be read as a general conservation law for Fisher information.

\begin{table}[t]
\centering
\small
\caption{What the principal Fisher quantities diagnose in the models analyzed here. The Werner thresholds are family-specific and are not universal entanglement criteria.}
\label{tab:summary}
\begin{tabular}{>{\raggedright\arraybackslash}p{0.27\linewidth}>{\raggedright\arraybackslash}p{0.27\linewidth}>{\raggedright\arraybackslash}p{0.36\linewidth}}
\toprule
Quantity & Exact result & Interpretation / limitation \\ \midrule
$E_F=(\hbar^2/8m)I_F$ & Generates the quantum-potential term in the ensemble variational equations & Probability-gradient contribution to kinetic energy; requires global phase quantization for full Schr\"odinger equivalence \\ 
$\mathcal F_\times$ (pure two qubits) & $\mathcal F_\times=C$ & Exact optimized cross-QFIM representation of pure-state concurrence \\ 
$\mathcal F_\times^{\rm cr}$ (two qubits) & $\mathcal F_\times^{\rm cr}(\rho)=C(\rho)$ & Convex-roof extension; exact two-qubit entanglement monotone \\ 
$\mathcal F_\times^{(W)}$ & $2p^2/(1+p)$ & Raw mixed-state cross-QFIM coupling; nonzero also for separable Werner states \\ 
$\Delta_F$ (Werner) & $\mathcal F_\times^{(W)}-C_W$ & Family-specific excess of raw cross-QFIM coupling over convex-roof entanglement \\ 
$B_{\max}$ (Werner) & $2\sqrt2\,p$ & CHSH nonlocality only for $p>1/\sqrt2$ \\ 
$I_\phi^{\rm opt}$ & $|\gamma|^2=V^2$ & Locally accessible phase information in the two-path marker model \\ \bottomrule
\end{tabular}
\end{table}

\section{Discussion}
The calculations bring together several uses of information geometry that are usually discussed in different contexts, while making clear that they are not the same object. Fisher information can enter a variational reconstruction of Schr\"odinger dynamics \citep{Reginatto1998,HallReginatto2002}; QFI measures the distinguishability of nearby parameterized quantum states \citep{BraunsteinCaves1994,PetzGhinea2011}; and QFI or covariance-based bounds can certify entanglement and metrological usefulness \citep{Hyllus2012,Toth2012,Gittsovich2010,Pezze2018}. The common feature is sensitivity to structure, but the operational meaning of that sensitivity depends on which space and which generator are being considered.

The convex-roof result clarifies the strongest connection established here. The raw mixed-state QFIM and the convex-roof extension are not competing definitions of the same quantity. The former is computed directly from a density operator and its parameterized tangent directions; the latter minimizes the optimized pure-state cross-QFIM over all pure-state ensembles. For two qubits, that minimization reproduces concurrence exactly. The distinction explains why the Werner raw cross-QFIM can remain nonzero in the separable region without contradicting the entanglement interpretation of the convex-roof quantity.

A useful physical reading of these results is that different quantum structures can be described through different statistical geometries without assuming that those geometries are identical. In the Madelung representation, $\psi=\sqrt\rho\,e^{iS/\hbar}$, the amplitude determines the configuration-space probability density and the phase determines flow. The classical Fisher functional contributes to the kinetic-energy decomposition through gradients of that density. By contrast, the QFIM lives on a parameterized quantum-state manifold, where its off-diagonal entries quantify coupling between chosen local parameter directions. Keeping this distinction explicit avoids attributing a single ontological meaning to every appearance of the word ``Fisher.''

This language is also compatible with the standard no-signalling interpretation of entanglement. For an entangled state, the fundamental object is the global vector or density operator in $\mathcal H_A\otimes\mathcal H_B$. The two subsystems do not possess independently complete state geometries. Conditioning on a measurement outcome changes the description appropriate to one part of the correlated global state, but this does not imply signal transmission faster than light. The framework therefore accepts Bell-type nonseparability rather than attempting to restore a local hidden-variable model.

At the interpretative level, the same language separates two ingredients that are sometimes conflated in discussions of measurement. System-apparatus correlations are created physically by interaction before an observer reads a result. One may therefore separate physical correlation formation from statistical conditioning. Likewise, nothing in the calculations selects a many-worlds ontology or rules it out. A coherent superposition can be treated operationally as one quantum state with interfering alternatives, while Everettian accounts assign additional interpretative meaning to the same unitary structure. The Fisher construction developed here does not decide between those readings; its contribution is the quantitative comparison of the underlying dynamical and correlation geometries.

\section{Generalized Fisher dynamics and falsifiability}
As formulated so far, the framework is empirically equivalent to standard quantum mechanics. Distinct predictions arise only after the information functional itself is modified. One possible extension is
\begin{equation}
I_{\mathbf G}=\int \rho\,(\nabla\ln\rho)^T\mathbf G(\nabla\ln\rho)\,d\mathbf{x},
\end{equation}
where $\mathbf G$ is positive definite. For the single-particle Cartesian model used above, the standard isotropic theory corresponds to $\mathbf G=I$. A nontrivial spatially dependent or physically anisotropic metric would generate a modified information potential $Q_{\mathbf G}[\rho]$ and could alter wave-packet spreading, tunnelling, interference phases, dispersion, or entanglement dynamics. A constant positive matrix, by contrast, may partly amount to a change of coordinates or an effective mass tensor and must therefore be interpreted with care.

Generalized exact-uncertainty constructions already show that modifying the underlying information/uncertainty structure can lead to nonlinear Schr\"odinger-type equations \citep{Rudnicki2016}. The more precise theoretical question is which additional assumptions are needed to select the standard functional. Linearity, unitarity, Galilean covariance, separability of independent systems, probability conservation, and no-signalling are natural candidates, but uniqueness is not established here. If these constraints single out the usual Fisher term, they would provide a stronger reconstruction of Schr\"odinger dynamics; if not, the surviving alternatives would define controlled extensions that could in principle be bounded experimentally.

\section{Analytical and numerical reproducibility}
All three plotted results are generated from closed-form expressions derived in the text. Figure~\ref{fig:schmidt} uses $C=\sin2\alpha$, $\mathcal F_\times=C$, $F_Q[J_z]/4=C^2$, and the binary entropy of the Schmidt eigenvalues. Figure~\ref{fig:werner} uses the raw mixed-state cross-QFIM $2p^2/(1+p)$, the convex-roof result $\mathcal F_\times^{\rm cr}=C_W$, the analytic difference $\Delta_F$, and the Horodecki CHSH maximum $2\sqrt2 p$. Figure~\ref{fig:decoherence} uses $V=|\gamma|$, $D=\sqrt{1-|\gamma|^2}$ for the ideal pure marker model, and $I_\phi^{\rm opt}=|\gamma|^2$. No fitting parameters or external datasets enter the figures. The mixed-state QFI calculation follows the standard spectral representation of the symmetric logarithmic derivative metric \citep{Helstrom1976,BraunsteinCaves1994,Paris2009}.

\section{Limitations}
Several limitations are essential. First, reconstructing local Schr\"odinger dynamics from Fisher information does not by itself prove that Fisher information is ontologically fundamental; it establishes a consistent reconstruction. Global equivalence also requires the appropriate quantum circulation/single-valuedness conditions, so the Wallstrom issue cannot be ignored \citep{Wallstrom1994}. Second, the raw cross-QFIM evaluated directly on a mixed density operator is not an entanglement monotone; the exact mixed-state identity requires the convex-roof extension and is established here only for two qubits, where concurrence supplies the relevant closed construction. Third, the Werner excess $\Delta_F$ is a family-specific diagnostic and is not a universal measure of discord or classical correlation. Fourth, decoherence suppresses interference and selects stable classical structures but does not, by itself, explain why one unique macroscopic record is experienced in each individual run \citep{Zurek2003}. Finally, the interpretation remains empirically equivalent to standard quantum mechanics unless the information functional itself is modified.

\section{Conclusions}
The results support a common information-geometric description of several quantum structures while also showing which operations are required to compare them consistently. In the ensemble variational formulation, the Fisher-information term
\begin{equation}
E_F=\frac{\hbar^2}{8m}I_F
\end{equation}
generates the quantum potential and converts Hamilton--Jacobi ensemble dynamics into the local Schr\"odinger dynamics. The equivalence remains subject to the usual global phase-quantization condition.

For pure two-qubit states, the optimized cross component of the local QFI matrix satisfies
\begin{equation}
\mathcal F_\times=C,
\end{equation}
while a collective phase generator gives
\begin{equation}
F_Q[J_z]=4C^2.
\end{equation}
These relations show both the usefulness and the generator dependence of QFI-based correlation geometry.

The main mixed-state result is that the pure-state identity admits an exact convex-roof extension:
\begin{equation}
\boxed{\mathcal F_\times^{\rm cr}(\rho)=C(\rho)}
\end{equation}
for every two-qubit density operator. This equality must be distinguished from the raw QFIM of the mixed state. For Werner states,
\begin{equation}
\mathcal F_\times^{(W)}=\frac{2p^2}{1+p},
\end{equation}
which remains nonzero in part of the separable region, whereas the convex-roof quantity is
\begin{equation}
\mathcal F_\times^{\rm cr}(\rho_W)=\max\left[0,\frac{3p-1}{2}\right].
\end{equation}
Their analytic difference $\Delta_F$ quantifies, within this family, the portion of the raw cross-parameter coupling that is not retained by the entanglement convex roof. The CHSH threshold remains distinct, emphasizing that parameter-space correlation, entanglement, and Bell nonlocality are different resources.

In the marked two-path interferometer, the optimal locally accessible phase Fisher information obeys
\begin{equation}
I_\phi^{\rm opt}=V^2,
\end{equation}
which provides a compact operational description of the loss of local phase sensitivity as which-way information becomes available through correlations with the marker.

Taken together, these results support a narrower conclusion than the claim that information alone ``solves'' quantum mechanics. The spatial Fisher functional and the QFIM are distinct mathematical objects, but both encode sensitivity to structure in their respective spaces. For two-qubit entanglement, the optimized pure-state cross-QFIM and its convex roof reproduce concurrence exactly; for mixed-state QFIM geometry, the Werner example shows why raw correlations must still be separated from entanglement. The framework therefore supplies a quantitative bridge between variational quantum dynamics, bipartite correlation geometry, and decoherence without conflating their operational meanings.

Two extensions now become particularly natural. First, the convex-roof construction should be tested beyond two qubits and compared with established higher-dimensional and multipartite resource measures. Second, on the dynamical side, one can ask whether separability of independent systems, Galilean covariance, linearity, probability conservation, and no-signalling uniquely select the standard Fisher functional or permit experimentally constrained generalizations.

\appendix
\section{Mixed-state QFI matrix for the Werner family}
For a density operator $\rho=\sum_k\lambda_k|k\rangle\langle k|$ undergoing a unitary multiparameter encoding generated by $G_\mu$, the symmetric-logarithmic-derivative QFI matrix can be written as
\begin{equation}
F_{\mu\nu}=2\sum_{k,l:\lambda_k+\lambda_l>0}
\frac{(\lambda_k-\lambda_l)^2}{\lambda_k+\lambda_l}
\,\mathrm{Re}\!\left[\langle k|G_\mu|l\rangle\langle l|G_\nu|k\rangle\right].
\end{equation}
The Werner state has singlet eigenvalue $\lambda_s=(1+3p)/4$ and a threefold-degenerate triplet eigenvalue $\lambda_t=(1-p)/4$. Local Pauli generators connect the singlet sector to appropriate triplet components, so only singlet--triplet matrix elements contribute. Since
\begin{equation}
\frac{(\lambda_s-\lambda_t)^2}{\lambda_s+\lambda_t}=\frac{p^2}{(1+p)/2}=\frac{2p^2}{1+p},
\end{equation}
the normalized generators $G_A=\sigma_x^A/2$ and $G_B=\sigma_x^B/2$ give
\begin{equation}
F_{AA}=F_{BB}=\frac{2p^2}{1+p},\qquad F_{AB}=-\frac{2p^2}{1+p},
\end{equation}
where the negative sign reflects the singlet anticorrelation. More generally, rotational symmetry gives $F_{AB}(\mathbf n,\mathbf m)=-[2p^2/(1+p)]\,\mathbf n\cdot\mathbf m$, so the maximum magnitude is attained for parallel or antiparallel local axes. This derivation also makes clear why the cross term survives in separable Werner states: the QFIM detects parameter-space correlations of the mixed density operator, not entanglement alone.

\section{Phase Fisher information in the marked two-path model}
For $P_\pm=(1\pm v\cos\phi)/2$, with $v=|\gamma|$, direct substitution in the classical Fisher definition gives
\begin{align}
I_\phi&=\sum_{s=\pm}\frac{(\partial_\phi P_s)^2}{P_s}\nonumber\\
&=\frac{v^2\sin^2\phi}{1-v^2\cos^2\phi}.
\end{align}
The maximum over the operating phase is reached at quadrature, $\phi=\pi/2$, yielding $I_\phi^{\rm opt}=v^2=V^2$. This equality refers to the specified binary output measurement; it should not be confused with a statement that every possible measurement on the global system-marker state has QFI equal to $V^2$.

\section*{Data availability}
No experimental datasets were used. The numerical curves shown in Figs.~\ref{fig:schmidt}--\ref{fig:decoherence} are generated directly from the analytical expressions given in the text.

\section*{Declaration of competing interest}
The authors declare no competing financial or non-financial interests.

\end{document}